\def\code{\mathcal{C}}
\def\stego{\mathcal{S}}
\def\ball{\mathcal{B}}
\def\synd{\mathbf{m}}
\def\ext{p}
\def\bound{T}
\def\r{\lambda}
\def\Rrho{\rho}
\newcommand{\F}{\mathbb F}
\newcommand{\ceil}[1]{\lceil #1\rceil}
\title{Ensuring message embedding in  wet paper
  steganography}
\author{Daniel Augot \inst{1} \and Morgan Barbier \inst{1} \and
  Caroline Fontaine\inst{2}}
\institute{Computer science laboratory of \'Ecole Polytechnique\\
  INRIA Saclay -- \^Ile de France \and CNRS/Lab-STICC and 
  T\'el\'ecom Bretagne, Brest, France}
\authorrunning{ Barbier \and Augot \and Fontaine}
\begin{document}
\maketitle

\sloppypar

\begin{abstract}
  Syndrome coding has been proposed by Crandall in 1998 as a method to
  stealthily embed a message in a cover-medium through the use of
  bounded decoding. In 2005, Fridrich \emph{et al.}  introduced wet
  paper codes to improve the undetectability of the embedding by
  enabling the sender to lock some components of the cover-data,
  according to the nature of the cover-medium and the
  message. Unfortunately, almost all existing methods solving the
  bounded decoding syndrome problem with or without locked components
  have a non-zero probability to fail.
  In this paper, we introduce a randomized syndrome coding, which
  guarantees the embedding success with probability one. We analyze
  the parameters of this new scheme in the case of perfect codes.
\end{abstract}

\keywords{steganography, syndrome coding problem, wet paper codes.}

\section{Introduction}

Hiding messages in innocuous-looking \emph{cover-media} in a
\emph{stealthy} way, steganography is the art of stealth
communications. The sender and receiver may proceed by cover
selection, cover synthesis, or cover modification to exchange
messages. Here, we focus on the cover modification scenario, where the
sender chooses some \emph{cover-medium} in his library, and modifies
it to carry the message she wants to send. Once the cover-medium is
chosen, the sender extracts some of its components to construct a
\emph{cover-data} vector. Then, she modifies it to embed the
message. This modified vector, called the \emph{stego-data}, leads
back to the \emph{stego-medium} that is communicated to the recipient.
In the case of digital images, the insertion may for example consist
in modifying some of the images components, \emph{e.g.} the luminance
of the pixels or the values of some transform (DCT or wavelet)
coefficients. For a given transmitted document, only the sender and
receiver have to be able to tell if it carries an hidden message or
not~\cite{simmons_crypto_83}. This means that the \emph{stego-media},
which carry the messages, have to be statistically indistinguishable
from original media~\cite{cachin_wih_98,cachin_ic_04}. But statistical
detectability of most steganographic schemes increases with
\emph{embedding distortion}~\cite{kodovsky_mmsec_07}, which is often
measured with the number of embedding changes. Hence it is of
importance for the sender to embed the message while modifying as less
components of the cover-data as possible.

In 1998, Crandall proposed to model the embedding and extraction
process with the use of linear error correcting codes. He proposed to
use Hamming codes, which are covering
codes~\cite{Crandall}. The key idea of this approach, called
\emph{syndrome coding}, or \emph{matrix embedding}, is to modify the
cover-data to obtain a stego-data lying in the \emph{right}
\emph{coset} of the code, its \emph{syndrome} being precisely equal to
the message to hide. Later on, it has been showed that designing
steganographic schemes is precisely equivalent to designing covering
codes~\cite{bierbrauer_oncrandall_01,galand_itw_03,galand_pit_09},
meaning that this covering codes approach is not
restrictive. Moreover, it has been shown to be really helpful and
efficient to minimize the embedding
distortion~\cite{bierbrauer_oncrandall_01,galand_itw_03,galand_pit_09,BierbrauerFridrich}. It
has also been made popular due to its use in the famous steganographic
algorithm F5~\cite{F5}. For all these reasons, this approach is of
interest.

The process which states which components of the cover-data can
actually be modified is called the \emph{selection
  channel}~\cite{LimitsSteganography}.  Since the message embedding
should introduce as little distortion as possible, the selection
channel is of utmost importance.  The selection channel may be
arbitrary, but a more efficient approach is to select it dynamically
during the embedding step, accordingly to the cover-medium and the
message. This leads to a better undetectability, and makes  attacks
on the system harder to run, but in this context the extraction of the
hidden message is more difficult as the selection channel is only known
to the sender, and not to the recipient. {\em Wet Paper Codes} were
introduced to tackle this non-shared selection channel, through the
notions of \emph{dry} and \emph{wet}
components~\cite{Fridrich_writingon}. 
By analogy with a sheet of paper
that has been exposed to rain, we can still write easily on {dry}
spots whereas we cannot write on {wet} spots. The idea is,
adaptively to the message and the cover-medium, to \emph{lock} some
components of the cover-data --- the wet components --- to prevent
them being modified. The other components --- the dry components ---
of the cover-data remain free to be modified to embed the
message. 

Algorithmically speaking, syndrome coding provides the recipient an
easy way to access the message, through a simple syndrome
computation. But to embed the message, the sender has to tackle an
harder challenge, linked with bounded syndrome decoding. It has been
shown that if random codes may seem interesting for their asymptotic
behavior, their use leads to solve really hard problems: syndrome
decoding and covering radius computation, which are proved to be
NP-complete and $\Pi_2$-complete
respectively~\cite{vardy_i3eit_97,mcloughlin_i3eit_84}. Moreover, no
efficient decoding algorithm is known, for generic, or random,
codes. Hence, attention has been given on structured codes
to design Wet Paper Codes: Hamming
codes~\cite{Crandall,EfficientWetPaperCodes}, Simplex
codes~\cite{fridrich_i3eisf_06_3}, BCH
codes~\cite{schonfeld_mmsec_06,schonfeld_ih_07,zhang_ih_09,sachnev_mmsec_09,ouldmedeni_jatit_10},
Reed-Solomon codes~\cite{fontaine_ih_07,FontaineGaland}, perfect
product codes~\cite{rifa_dsp_09,rifa_isita_10}, low density generator
matrix
codes~\cite{fridrich_spie_07,zhang_ih_08,zhang_i3eit_10,filler_wifs_09},
and convolutional
codes~\cite{TreillisCode,filler_wifs_10,filler_i3eifs_11}.

Embedding techniques efficiency is usually evaluated through their
relative payload (number of message symbols per cover-data
(modifiable) symbol) and average embedding efficiency (average number
of message symbols per cover-data modification).
Today, we can find in the literature quasi-optimal codes in terms of
average embedding efficiency and
payload~\cite{fridrich_spie_07,zhang_ih_08,zhang_i3eit_10,fridrich_i3eifs_09,filler_wifs_09}.
Nevertheless, we are interested here in another criterion, which is
usually not discussed: the probability for the embedding to fail. In
fact, the only case for which it never fails is when using perfect
codes (a), without locking any component of the cover-data (b). But
very few codes are perfect (namely the Hamming and Golay codes), and
their average embedding efficiency is quite low. Moreover it is really
important in practice to be able to lock some components of the
cover-data. Hence, efficient practical schemes usually do not satisfy either
condition (a) or condition (b), leading to a non-zero probability for
the embedding to fail.
And this probability increases with the number of locked components.
More precisely, syndrome coding usually divides the whole message into
fragments, that are separately inserted in different cover-data
vectors (coming from one or several cover-medium). Inserting each
fragment involves finding a low weight solution of a linear system
which may not always have a solution for a given set of locked
components. Consequently, the probability that the whole message can
be embedded decreases exponentially with the number of fragments to
hide and with the number of locked components
\cite{EfficientWetPaperCodes}

Hence, we have to decide what to do when embedding fails.  In the
common scenario where the sender has to choose a cover-medium in a
huge collection of documents, she can drop the cover-medium that leads
to a failure and choose another one, iterating the process until
finding a cover-medium that is adequate to embed the message. Another
solution may be to cut the message into smaller pieces, in order to
have shorter messages to embed, and a lower probability of failure. If
none of these is possible, for example if the sender only has few
pieces of content, she may unlock some locked
components~\cite{TreillisCode} to make the probability of failure
decrease. But, even doing this modified embedding, and decreasing the
probability of failure, the sender will not be able to drop it to
zero, except if she falls back to perfect codes without locked
components.

In this paper, we consider the ``worst case'' scenario, where the
sender does not have too much cover documents to hide his message in,
and then absolutely needs embedding to succeed. This scenario is not
the most studied one, and concerns very constrained situations. Our
contribution is to propose an embedding scheme that will never fail,
and does not relax the management of locked components of his
cover-data to make embedding succeed.  It is, to our knowledge, the
first bounded syndrome coding scheme that manages locked components
while guaranteeing the complete embedding of the message for any code,
be it perfect or not. To do so, we modify the classical syndrome
coding approach by using some part of the syndrome for randomization.
Of course, as the message we can embed is now shorter than the
syndrome, there is a loss in terms of embedding efficiency. We analyze
this loss in the case of linear perfect codes. Moreover, inspired by
the ZZW construction~\cite{zhang_ih_08}, we show how the size of the
random part of the syndrome, which is dynamically estimated during
embedding, can be transmitted to the recipient without any additional
communication.

The paper is organized as follows. Basic definitions and notation on
both steganography and syndrome coding are introduced in
Section~\ref{sec:stegcoding}. The traditional syndrome coding approach
is recalled at the end of this section. In
Section~\ref{sec:ourscheme}, we show how to slightly relax the
constraints on the linear system to make it always solvable, and also
estimate the loss of embedding efficiency. 
We discuss the behavior of our scheme in the case of the Golay and
Hamming perfect codes in Section~\ref{sec:ex}. Finally, as our
solution uses a parameter $r$ that is dynamically computed during
embedding, we provide in Section~\ref{sec:zzw} a construction that
enables to transmit $r$ to the recipient through the stego-data
itself, that is, without any parallel or side-channel
communication. We finally conclude in Section~\ref{sec:conclusion}.

\section{Steganography and coding theory}
\label{sec:stegcoding}

\subsection{Steganographic schemes}

We define a \emph{stego-system} (or a \emph{steganographic scheme}) by
a pair of functions, $Emb$ and $Ext$. $Emb$ embeds the message
$\mathbf{m}$ in the cover-data $\mathbf{x}$, producing the stego-data
$\mathbf{y}$, while $Ext$ extracts the message $\mathbf{m}$ from the
stego-data $\mathbf y$. To make the embedding and extraction work
properly, these functions have to satisfy the following properties.

\begin{definition}[Stego-System]
  Let $\mathcal{A}$ a finite alphabet, $r, n \in \mathbb{N}$ such that $r<n$,
  $\mathbf{x} \in \mathcal{A}^n$ denote the {\em cover-data}, $\synd
  \in \mathcal{A}^r$ denote the  message to
    embed, and $\bound$ be a strictly positive integer. A {\em
      stego-system} is defined by a pair of functions $Ext$ and $Emb$
    such that:
    \begin{eqnarray}
      Ext(Emb(\mathbf{x},\synd)) & =&  \synd \label{req:emb}\\
      d(\mathbf{x},Emb(\mathbf{x},\synd)) & \le & \bound \label{req:dist}
    \end{eqnarray}
where $d(.,.)$ denoting the Hamming distance over $\mathcal{A}^n$.
\end{definition}
Two quantities are usually used to compare stego-systems:
the embedding efficiency and the relative payload, which are defined
as follows.

\begin{definition}[Embedding efficiency]
  The {\em average embedding efficiency} of a stego-system, is usually
  defined by the ratio of the number of message symbols we can embed
  by the average number of symbols changed. We denote it by $e$.
\end{definition}

\begin{definition}[Relative payload]
  The {\em relative payload} of a stego-system, denoted by $\alpha$,
  is the ratio of the number of message symbols we can embed by the
  number of (modifiable) symbols of covered data.
\end{definition}
For $q$-ary syndrome coding, the sphere-covering bound gives an upper
bound for the embedding efficiency \cite{fridrich_i3eifs_09}. Note
that it is usually stated for binary case, using the binary entropy
function.

\begin{proposition}[Sphere-covering bound]\label{prop:embeffbound}
    For any $q$-ary stego-system $\stego$, the {\em sphere-covering bound}
    gives 
    $$
    e \le \frac{\alpha}{\mathcal{H}_q^{-1}(\alpha)},
    $$ where $\mathcal{H}_q^{-1}()$ denotes the inverse function of
    the $q$-ary entropy \mbox{$\mathcal{H}_q(x) = x \log_q(q-1) -x \log_q(x) -
      (1-x)\log_q(1-x)$} on $[0,1-1/q]$, and $\alpha$
    is the relative payload associated with $\stego$.
\end{proposition}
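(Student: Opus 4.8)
The plan is to count, for a fixed cover-data, how many distinct stego-data it can be mapped to, and to compare that count with the volume of a Hamming ball. First I would fix $\mathbf{x}\in\mathcal{A}^n$ and consider the map $\synd\mapsto Emb(\mathbf{x},\synd)$ from $\mathcal{A}^r$ to $\mathcal{A}^n$. By the extraction requirement~\eqref{req:emb}, $Ext\bigl(Emb(\mathbf{x},\synd)\bigr)=\synd$, so this map is injective: two messages producing the same stego-data would be extracted to the same message. Hence $\{Emb(\mathbf{x},\synd):\synd\in\mathcal{A}^r\}$ has exactly $q^{r}$ elements, and by the distortion requirement~\eqref{req:dist} they all lie in the Hamming ball $\ball(\mathbf{x},\bound)$ of radius $\bound$ around $\mathbf{x}$. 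This yields the basic counting inequality
\[
  q^{r}\ \le\ \bigl|\ball(\mathbf{x},\bound)\bigr|\ =\ \sum_{i=0}^{\bound}\binom{n}{i}(q-1)^{i}.
\]

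Next I would convert this into a statement on rates. Since $\mathcal{H}_q$ is strictly increasing on $[0,1-1/q]$, its inverse $\mathcal{H}_q^{-1}$ is well defined there, and the classical volume estimate $\frac1n\log_q\bigl|\ball(\mathbf{x},\rho n)\bigr|=\mathcal{H}_q(\rho)+o(1)$ on that range, applied to the inequality above, already gives $\bound\ge n\,\mathcal{H}_q^{-1}(\alpha)$ with $\alpha=r/n$. To replace the worst-case radius $\bound$ by the \emph{average} number of changes $\bar d(\mathbf{x})\egaledef q^{-r}\sum_{\synd}d\bigl(\mathbf{x},Emb(\mathbf{x},\synd)\bigr)$, I would use that the $q^{r}$ stego-data are \emph{pairwise distinct}: whatever they are, their mean distance to $\mathbf{x}$ is at least that of the $q^{r}$ points of $\mathcal{A}^n$ nearest to $\mathbf{x}$. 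Writing this nearest-point mean as $q^{-r}\sum_{j\ge0}\max\bigl(0,\,q^{r}-|\ball(\mathbf{x},j)|\bigr)$ and bounding $|\ball(\mathbf{x},j)|\le q^{\,n\mathcal{H}_q(j/n)}$ shows that all but a vanishing fraction of these points sit at distance at least $\bigl(\mathcal{H}_q^{-1}(\alpha)-\varepsilon\bigr)n$; letting $\varepsilon\to0$ gives $\bar d(\mathbf{x})\ge n\,\mathcal{H}_q^{-1}(\alpha)\bigl(1+o(1)\bigr)$, uniformly in $\mathbf{x}$, hence also after any averaging over covers.

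It then remains to divide: $e=r/\bar d\le\alpha n/\bigl(n\,\mathcal{H}_q^{-1}(\alpha)\bigr)=\alpha/\mathcal{H}_q^{-1}(\alpha)$, the announced bound. The counting inequality $q^{r}\le|\ball(\mathbf{x},\bound)|$ is immediate; the step I expect to be the main obstacle is the passage from the maximal distortion $\bound$ to the average number of changes, which forces one to work with the layer-by-layer volume estimate and with the monotonicity of $\mathcal{H}_q$, and which is also why the bound is naturally asymptotic, becoming tight essentially only when balls of the relevant radius tile $\mathcal{A}^n$ --- as they do for perfect codes, the case of interest in the sequel.
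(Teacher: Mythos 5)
The paper itself does not prove this proposition --- it is imported from the cited work of Fridrich \emph{et al.} --- so your attempt is measured against the standard derivation rather than an in-paper argument. Your skeleton is that standard derivation, and its combinatorial core is sound: injectivity of $\synd\mapsto Emb(\mathbf{x},\synd)$ from Eq.~(\ref{req:emb}), hence $q^{r}$ pairwise distinct stego-data for each fixed cover $\mathbf{x}$; the remark that their mean distance to $\mathbf{x}$ is at least that of the $q^{r}$ points of $\F_q^n$ closest to $\mathbf{x}$; and the exact expression $q^{-r}\sum_{j\ge 0}\max\bigl(0,\,q^{r}-|\ball(\mathbf{x},j)|\bigr)$ for that minimum. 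The detour through the worst-case radius $\bound$ is not needed, but harmless.

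The genuine gap is quantitative: the statement is a plain inequality valid for any stego-system, while your shell-counting step only delivers it up to $1+o(1)$. Bounding the inner layers by $|\ball(\mathbf{x},j)|\le q^{n\mathcal{H}_q(j/n)}$ and discarding a ``vanishing fraction'' loses an additive term: the sum $\sum_{j<n\rho}q^{n\mathcal{H}_q(j/n)}$ is comparable to its last term, of order $q^{r}$, so what you actually obtain is $\bar d(\mathbf{x})\ge n\,\mathcal{H}_q^{-1}(\alpha)-O(1)$ (also note the sign slip: you want a factor $(1-o(1))$, not $(1+o(1))$). That proves only the asymptotic form of the proposition. To get the exact inequality, replace the layer count by a maximum-entropy argument: for fixed $\mathbf{x}$, the uniform distribution on the $q^{r}$ difference vectors $Emb(\mathbf{x},\synd)-\mathbf{x}$ has $q$-ary entropy exactly $r$ and expected Hamming weight $\bar d(\mathbf{x})$, and any distribution on $\F_q^n$ with expected weight $w\le n(1-1/q)$ has $q$-ary entropy at most $n\,\mathcal{H}_q(w/n)$; hence $r\le n\,\mathcal{H}_q(\bar d(\mathbf{x})/n)$, the case $\bar d(\mathbf{x})/n>1-1/q$ being trivial since then $\bar d(\mathbf{x})/n$ already exceeds $\mathcal{H}_q^{-1}(\alpha)$. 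This gives $\bar d(\mathbf{x})\ge n\,\mathcal{H}_q^{-1}(\alpha)$ for every cover, hence after any averaging, and then $e=r/\bar d\le\alpha/\mathcal{H}_q^{-1}(\alpha)$ with no asymptotics, which is the form the paper states and uses.
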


\subsection{From coding theory to steganography}
\label{sec:crandall}

This section recalls how coding theory may help embedding the message,
and how it tackles the non-shared selection channel paradigm. In the
rest of paper, the finite alphabet $\mathcal{A}$ is a finite field of
cardinal $q$, denoted $\mathbb{F}_q$.

Here we focus on the use of linear codes, which is the most
studied. Let $\code$ be a $[n,k,d]_q$-linear code, with parity check
matrix $H$ and covering radius $\Rrho$ --- it is the smallest
integer such that the balls of radius $\Rrho$ centered on $\code$'s
codewords cover the whole ambient space $\mathbb{F}_q^n$.  A syndrome
coding scheme based on $\code$ basically modify the cover-data
$\mathbf{x}$ in such a way that the syndrome $\mathbf{y}H^t$ of the
stego-data $\mathbf{y}$ will precisely be equal to the message
$\mathbf{m}$. Determining which symbols of $\mathbf{x}$ to modify
leads to finding a solution of a particular linear system that
involves the parity check matrix $H$.  This embedding approach has been
introduced by Crandall in 1998 \cite{Crandall}, and is called
\emph{syndrome coding} or \emph{matrix embedding}.

We formulate several embedding problems. The first one addresses
only Eq.~(\ref{req:emb}) requirements, whereas the second one also
tackles Eq.~(\ref{req:dist}).

\begin{problem}[Syndrome coding problem]\label{Prob:SCP}
  Let $\code$ be an $[n,k,d]_q$ linear code, $H$ be a parity check
  matrix of $\code$, $\mathbf{x} \in \mathbb{F}_q^n$ be a cover-data,
  and $\synd 
  \in \mathbb{F}_q^{n-k}$ be the message to be hidden in $\mathbf{x}$.
  The \emph{syndrome coding problem} consists in finding
  $\mathbf{y}\in \mathbb{F}_q^n$ such that $\mathbf{y}H^t = \synd$.
\end{problem}

\begin{problem}[Bounded syndrome coding problem]\label{Prob:BSCP}
  Let $\code$ be an $[n,k,d]_q$ linear code, $H$ be a parity check
  matrix of $\code$, $\mathbf{x} \in \mathbb{F}_q^n$ be a cover-data, $\synd
  \in \mathbb{F}_q^{n-k}$ be the message to be hidden in $\mathbf{x}$, and $\bound
  \in \mathbb{N}^*$ be an upper bound on the number of authorized
  modifications.  The \emph{bounded syndrome coding problem} consists in
  finding $\mathbf{y}\in \mathbb{F}_q^n$ such that $\mathbf{y}H^t = \synd$,
    and $d(\mathbf{x},\mathbf{y}) \le \bound$.
\end{problem}
Let us first focus on Problem~\ref{Prob:SCP}, which leads to
describing the stego-system in terms of syndrome computation:
\begin{align*}
\mathbf{y}& = Emb(\mathbf{x}, \synd) =  \mathbf{x} + D(\synd -
\mathbf{x}H^t),\\
Ext(\mathbf{y}) & =  \mathbf{y} H^t,
\end{align*}
where $D$ is the mapping associating to a syndrome $\synd$, a vector
whose syndrome is precisely equal to $\synd$. The mapping $D$ is thus
directly linked to a decoding function $f_\code$ of $\code$ of
arbitrary radius $\bound_f$, defined as \mbox{$f_\code : \mathbb{F}_q^n
\longrightarrow \code \cup \lbrace?\rbrace, $} such that for all
$\mathbf{y} \in \mathbb{F}_q^n$, either $f_\code(\mathbf{y}) = ?$, or
$d(\mathbf{y},f_\code(\mathbf{y})) \le \bound_f$.

The Hamming distance between vectors $\mathbf{x}$ and $\mathbf{y}$ is
then less than or equal to $\bound_f$.  Since decoding general codes
is NP-Hard \cite{DecodingNPHard}, finding such a mapping $D$ is not
tractable if $\code$ does not belong to a family of codes we can
efficiently decode. Moreover, to be sure that the Problem~2 always has
a solution, it is necessary and sufficient that $f_\code$ can decode
up to the covering radius of $\code$.  This means that solving
Problem~\ref{Prob:BSCP} with $\bound=\Rrho$ is precisely equivalent to
designing a stego-system which find solutions to both
Eqs.~(\ref{req:emb}) and (\ref{req:dist}) requirements for any
$\mathbf x$ and $\mathbf m$.  In this context, perfect
codes, for which the covering radius is precisely equal to the
error-correcting capacity ($\Rrho=\lfloor\frac{d-1}{2}\rfloor$), are
particularly relevant.

Unfortunately, using perfect codes leads to an embedding efficiency
which is far from the bound given in
Prop.~\ref{prop:embeffbound}~\cite{BierbrauerFridrich}. Hence
non-perfect codes have been studied (see the Introduction), even if
they can only tackle Problem~\ref{Prob:BSCP} for some $T$ much lower
than $\Rrho$. This may enable to force the system to perform only a
small number of modifications.

As discussed in the introduction, {\em Wet paper} codes were
introduced to improve embedding undetectability through the management
of locked, or \emph{wet}, components~\cite{Fridrich_writingon}.
\begin{problem}[Bounded syndrome wet paper coding problem]
  \label{Prob:BSCPLP}
  Let $\mathcal{C}$ be an $[n,k,d]_q$ linear code, $H$ be a parity
  check matrix of $\code$, $\mathbf{x} \in \mathbb{F}_q^n$, $\synd \in
  \mathbb{F}_q^{n-k}$, $\bound \in \mathbb{N}^*$, and a set of locked,
  or wet, components $\mathcal{I} \subset \lbrace1,\ldots,n\rbrace$,
  \mbox{$\ell = |\mathcal{I}|$}. The \emph{Bounded syndrome wet paper coding
    problem} consists in finding \mbox{$\mathbf{y}\in \mathbb{F}_q^n$} such
  that $\mathbf{y}H^t = \synd$, $d(\mathbf{x},\mathbf{y}) \le \bound$,
  and $\mathbf{x}_i = \mathbf{y}_i$ for all $ i \in \mathcal{I}$.
\end{problem}
Of course, solving Problem~\ref{Prob:BSCPLP} is harder and even
perfect codes may fail here.
More precisely, to deal with locked components, we usually decompose
the parity check matrix $H$ of $\code$ in the following
way~\cite{Fridrich_writingon,Fridrich06wetpaper}:
\begin{eqnarray*}
  \mathbf{y}H^t & = & \synd,\\
  \mathbf{y}_{|\bar{\mathcal{I}}}H_{|\bar{\mathcal{I}}}^t +
  \mathbf{y}_{|\mathcal{I}}H_{|\mathcal{I}}^t & = & \synd,\\
  \mathbf{y}_{|\bar{\mathcal{I}}}H_{|\bar{\mathcal{I}}}^t & = &
  \synd - \mathbf{y}_{|\mathcal{I}}H_{|\mathcal{I}}^t,
\end{eqnarray*}
where $\bar{\mathcal{I}}=\{1,\ldots,n\}\setminus \mathcal{I}$.
The previous equation can only be solved if
\mbox{$rank(H_{\bar{\mathcal{I}}}) = n-k$}. 
Since the potential structure of $H$ does not help to solve the
previous problem, we could as well choose $H$ to be also a random
matrix, which provides the main advantage to maximize asymptotically
the average embedding
efficiency~\cite{galand_itw_03,Fridrich06wetpaper}.

Hiding a long message  requires to split it and to
repeatedly use the basic scheme. Let $P_H$ the success probability for
embedding $(n-k)$ symbols, then the global success probability $P$ for
a long message of length $L(n-k)$ is $P_H^L$. This probability
decreases exponentially with the message length.

In order to bypass this issue, previous works propose either to take
another cover-medium, or to modify some locked components. In this
paper, we still keep unmodified the locked components, thus
maintaining the same level of undetectability. Moreover, we tackle the
particular case where the sender does not have a lot of cover-media
available, and needs a successful embedding, even if this leads to a
smaller embedding efficiency.

In the original  Wet Paper Setting
of~\cite{Fridrich_writingon}, the embedding efficiency is not dealt
with. In that case, we have a much easier problem.
\begin{problem}[Unbounded  wet paper Syndrome coding problem]
  \label{Prob:UBSCPLP}
  Let $\mathcal{C}$ be an $[n,k,d]_q$ linear code, $H$ be a parity
  check matrix of $\code$, $\mathbf{x} \in \mathbb{F}_q^n$, $\synd \in
  \mathbb{F}_q^{n-k}$, and a set of locked components $\mathcal{I}
  \subset \lbrace1,\ldots,n\rbrace$, $\ell = |\mathcal{I}|$. The
  \emph{Unbounded  wet paper Syndrome coding problem} consists in
  finding $\mathbf{y}\in \mathbb{F}_q^n$ such that $\mathbf{y}H^t =
  \synd$, and $\mathbf{x}_i = \mathbf{y}_i$, for all $ i \in
  \mathcal{I}$.
\end{problem}
In a random case setting, this problem can be discussed using a lower
bound on random matrices, provided by~\cite{Brent01randomkrylov}.
\begin{theorem}
  \label{Thm:proba}
  Let M be a random $ncol \times nrow$ matrix defined over $\mathbb{F}_q$,
  such that $ncol \ge nrow$. We have:
  $$
  	P\left(rank(M)=nrow\right) \ge
  	\left\lbrace
  	\begin{array}{ll}
  	  0.288, & \text{if }ncol=nrow \text{ and } q=2,\\
  	  1 - \frac{1}{q^{ncol-nrow}(q-1)}, & \text{otherwise}.
  	\end{array}
  	\right.
  $$ 
\end{theorem}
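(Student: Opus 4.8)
The plan is to compute $P\bigl(rank(M)=nrow\bigr)$ exactly as a finite product, and then lower-bound that product separately in the two regimes of the statement.

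First I would use that transposition preserves rank to assume $M$ has $nrow$ rows and $ncol\ge nrow$ columns, so that $rank(M)=nrow$ becomes the event that the rows $\mathbf{r}_1,\dots,\mathbf{r}_{nrow}$ of $M$ are linearly independent vectors of $\mathbb{F}_q^{ncol}$. Revealing them one at a time: conditioned on $\mathbf{r}_1,\dots,\mathbf{r}_i$ being independent, they span a subspace of $\mathbb{F}_q^{ncol}$ of size $q^i$, and since $\mathbf{r}_{i+1}$ is uniform and independent of the previous rows it avoids that subspace with probability $(q^{ncol}-q^i)/q^{ncol}$. Multiplying these conditional probabilities (chain rule) yields
$$
P\bigl(rank(M)=nrow\bigr)=\prod_{i=0}^{nrow-1}\bigl(1-q^{\,i-ncol}\bigr)=\prod_{j=ncol-nrow+1}^{ncol}\bigl(1-q^{-j}\bigr).
$$

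For the generic branch ($q\ge 3$, or $q=2$ with $ncol>nrow$) I would drop the product structure and invoke the elementary inequality $\prod_i(1-a_i)\ge 1-\sum_i a_i$, valid for $a_i\in[0,1]$ and immediate by induction. With $a_j=q^{-j}$ this gives
$$
P\bigl(rank(M)=nrow\bigr)\ge 1-\sum_{j=ncol-nrow+1}^{ncol}q^{-j}>1-\sum_{j=ncol-nrow+1}^{\infty}q^{-j}=1-\frac{1}{q^{\,ncol-nrow}(q-1)},
$$
the last step being the sum of a geometric series; this is exactly the claimed bound (in fact strict).

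The case $ncol=nrow$, $q=2$ is the only delicate one, since there the bound above degenerates to the vacuous value $0$. Here the product is $\prod_{j=1}^{ncol}(1-2^{-j})$, which is positive and strictly decreasing in $ncol$, hence for every $ncol$ bounded below by its limit $\prod_{j\ge 1}(1-2^{-j})$. It then remains to certify that this constant is larger than $0.288$; I would do so by evaluating a few initial factors explicitly and bounding the tail via $\prod_{j\ge j_0}(1-2^{-j})\ge 1-\sum_{j\ge j_0}2^{-j}=1-2^{-(j_0-1)}$, which rapidly shows the constant is $\approx 0.28879>0.288$. This numerical tail control --- rather than any conceptual point --- is the main obstacle; everything else is routine.
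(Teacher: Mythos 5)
Your proof is correct. The chain-rule computation giving the exact probability $\prod_{j=ncol-nrow+1}^{ncol}(1-q^{-j})$ of full rank for a uniform random matrix is right, the Weierstrass-type bound $\prod_i(1-a_i)\ge 1-\sum_i a_i$ combined with the geometric tail gives exactly $1-\frac{1}{q^{ncol-nrow}(q-1)}$ in the generic branch, and your handling of the square binary case (monotone decrease to the infinite product $\prod_{j\ge1}(1-2^{-j})\approx 0.2888$, certified numerically by a partial product times the tail bound $1-2^{-(j_0-1)}$) is sound: five factors already give $0.29800\times0.96875\approx 0.2887>0.288$. Note, though, that the paper does not prove this theorem at all --- it is quoted as a known lower bound on the rank of random matrices from the cited reference of Brent \emph{et al.}, so there is no internal argument to compare against. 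Your self-contained derivation is the standard one behind that reference, and it in fact yields slightly more than the statement: the exact product formula, and strict inequality in the non-square or $q\ge 3$ cases.
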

In a worst-case, or infallible, setting, the relevant parameter of
the code is its \emph{dual distance}.
\begin{proposition}\label{prop:BarbierMunuera}

  Consider a $q$-ary wet channel on length $n$ with at most
  $\ell$ wet positions, and that there exists a $q$-ary code $C$ whose
  dual code $C^\perp$ has parameters $[n,k^\perp,d^\perp=\ell]_q$ with
  $k^\perp+d^\perp=n+1-g$. Then we can surely embed $n-\ell-g$ symbols
  using a parity check matrix of $C$.
\end{proposition}
\begin{proof}
  This can be derived from~\cite[Theorem
  2.3]{BarbierMunuera:ARXIV2011}.\end{proof} This means that if the
code is $g$ far from the Singleton bound, then we loose $g$
information symbols with respect to the maximum. In particular, if
$n<q$, there exists a $q$-ary Reed-Solomon code with $g=0$, and we can
always embed $n-\ell$ symbols when there are $\ell$ wet
symbols. Coding theory bounds tells us that the higher $q$, the
smallest $g$ can be achieved, eventually using Algebraic-Geometry
codes~\cite{Vladut-Nogin-Tsfasman:AGCBN2007}.

\section{Randomized (wet paper) syndrome coding}
\label{sec:ourscheme}

Since embedding a message has a non-zero probability to fail, we
propose to relax the constraints in the following way:

\begin{problem}[Randomized bounded syndrome coding problem for wet paper]
  \label{PB:RWPC}
  Let $\mathcal{C}$ be an $[n,k,d]_q$ linear code, $H$ be a parity
  check matrix of $\code$, $r$ and $ \bound$ be two integers,
  $\mathbf{x} \in \mathbb{F}_q^n$, $\synd \in \mathbb{F}_q^{n-k-r}$ be
  the message to embed, and $\mathcal{I} \subset
  \lbrace1,\ldots,n\rbrace$ be the set of locked components,
  $\ell=|\mathcal{I}|$. Our \emph{randomized syndrome coding problem
    for wet paper} consists in finding $\mathbf{y}\in \mathbb{F}_q^n$
  and $\mathbf R\in\mathbb{F}_q^r$ such that (i) $\mathbf{y}H^t =
  (\synd||\mathbf{R})$, and $||$ denotes the concatenation operator,
  (ii) $d(\mathbf{x},\mathbf{y}) \le \bound$, and (iii) $\mathbf{x}_i
  = \mathbf{y}_i$, for all $ i \in \mathcal{I}$.
\end{problem}
We thus randomize one fraction of the syndrome to increase the number of
solutions. This gives a degree of freedom which may be large enough to
solve the system. The traditional approach can then be applied to find
$\mathbf{y}_{|\bar{\mathcal{I}}}$ and consequently $\mathbf{y}$.
Using some random symbols in the syndrome was used in the signature
scheme of Courtois, Finiasz and Sendrier
\cite{courtois-finiasz-sendrier-asiacrypt01}. While this reformulation
allows to solve the bounded syndrome coding problem in the wet paper
context without failure, we obviously lose some efficiency compared to
the traditional approach.

We now estimate the loss in embedding efficiency for a given
number of locked components. Let $e$ denote the embedding
efficiency of the traditional approach, and $e'$ denote the
efficiency of the randomized one. We obtain  a relative loss of:
$$
  \frac{e - e'}{e} = \frac{r}{n-k},
$$
while being assured that any $n-k-r$ message
be embedded, as long as $r<n-k$.

Optimizing the parameter $r$ is crucial, to ensure that our
reformulated problem always has a solution, while preserving the best
possible embedding efficiency. This is the goal on next Section.

\section{Case of perfect linear codes}
\label{sec:ex}

We discuss in this Section a sufficient condition on the size $r$ of
randomization, for our reformulated problem to always have a
solution.

\subsection{General Statement}

The \emph{syndrome function} associated with $H$, noted $S_H$, is
defined by:
  $$
  \begin{array}{rccl}
    S_H : & \mathbb{F}_q^n & \longrightarrow & \mathbb{F}_q^{n-k}\\
    & \mathbf{x} & \longmapsto & \mathbf{x}H^t.
  \end{array}
  $$
  This function $S_H$ is linear and surjective, and satisfies the
  following well-known properties. Let $\ball(\mathbf{x},T)$ denote
  the Hamming ball of radius $T$ centered on $\mathbf x$.

\begin{proposition}
  \label{Prop:Injective}
  Let $\mathcal{C}$ be an $[n,k,d]_q$-linear code, with covering
  radius $\Rrho$, $H$ a parity check matrix of $\mathcal{C}$, and
  $S_H$ the syndrome function associated with $H$. For all $\mathbf{x}
  \in \mathbb{F}_q^n$, the function $S_H$ restricted to
  $\ball(\mathbf{x},\left\lfloor\frac{d-1}{2} \right\rfloor)$ is
  one-to-one, the function $S_H$ restricted to
  $\mathcal{B}(\mathbf{x},\Rrho)$ is surjective.  When $\mathcal{C}$
  is perfect, the syndrome function restricted to
  $\ball(\mathbf{x},\Rrho)$ is bijective.
\end{proposition}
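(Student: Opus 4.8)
The plan is to prove the three claims of Proposition~\ref{Prop:Injective} in turn, relying on elementary properties of the syndrome function together with the definitions of minimum distance and covering radius. Throughout I would use the fact that $S_H$ is linear, so that for any $\mathbf u,\mathbf v$ we have $S_H(\mathbf u)=S_H(\mathbf v)$ if and only if $\mathbf u-\mathbf v\in\code$, which is the workhorse of the argument.

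\emph{Injectivity on the ball of radius $\lfloor(d-1)/2\rfloor$.} Suppose $\mathbf u,\mathbf v\in\ball(\mathbf x,\lfloor(d-1)/2\rfloor)$ satisfy $S_H(\mathbf u)=S_H(\mathbf v)$. Then $\mathbf c\egaledef\mathbf u-\mathbf v\in\code$, and by the triangle inequality $d(\mathbf u,\mathbf v)\le d(\mathbf u,\mathbf x)+d(\mathbf x,\mathbf v)\le 2\lfloor(d-1)/2\rfloor\le d-1$, so the weight of $\mathbf c$ is at most $d-1$. Since the minimum distance of $\code$ is $d$, the only codeword of weight $\le d-1$ is $\mathbf 0$, hence $\mathbf u=\mathbf v$. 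This is the standard "error-correcting capacity" argument.

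\emph{Surjectivity on the ball of radius $\Rrho$.} Fix $\mathbf x$ and let $\synd\in\mathbb F_q^{n-k}$ be arbitrary. Because $S_H$ is surjective on all of $\mathbb F_q^n$, there is some $\mathbf z$ with $S_H(\mathbf z)=\synd$; I want to move $\mathbf z$ into $\ball(\mathbf x,\Rrho)$ without changing its syndrome. Consider $\mathbf z-\mathbf x$: by definition of the covering radius there is a codeword $\mathbf c\in\code$ with $d(\mathbf z-\mathbf x,\mathbf c)\le\Rrho$. Set $\mathbf y\egaledef\mathbf z-\mathbf c$. Then $S_H(\mathbf y)=S_H(\mathbf z)=\synd$ since $\mathbf c\in\code$, and $d(\mathbf y,\mathbf x)=d(\mathbf z-\mathbf c,\mathbf x)=d(\mathbf z-\mathbf x,\mathbf c)\le\Rrho$, so $\mathbf y\in\ball(\mathbf x,\Rrho)$ with the right syndrome. (Equivalently, one argues that decoding $\mathbf z-\mathbf x$ up to the covering radius yields a representative in the ball.)

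\emph{Bijectivity when $\code$ is perfect.} When $\code$ is perfect, $\Rrho=\lfloor(d-1)/2\rfloor$, so the first two parts already give that $S_H$ restricted to $\ball(\mathbf x,\Rrho)$ is simultaneously injective and surjective onto $\mathbb F_q^{n-k}$; hence it is a bijection. As a sanity check one can note the cardinalities match: $|\ball(\mathbf x,\Rrho)|=q^{n-k}=|\mathbb F_q^{n-k}|$ is exactly the sphere-packing (= perfect) condition, which also lets one derive injectivity from surjectivity or vice versa. I do not expect any genuine obstacle here; the only mild subtlety is being careful that the covering-radius step produces a vector whose \emph{distance to $\mathbf x$} — not to $\mathbf 0$ — is bounded, which the change of variables $\mathbf z\mapsto\mathbf z-\mathbf x$ handles cleanly.
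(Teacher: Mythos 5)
Your proof is correct and complete; the paper itself states these properties as ``well-known'' and gives no proof, and your argument (the minimum-distance/triangle-inequality argument for injectivity, the translation by $\mathbf x$ plus covering-radius argument for surjectivity, and $\Rrho=\lfloor(d-1)/2\rfloor$ for the perfect case) is exactly the standard one that is being implicitly invoked. Nothing is missing.
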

Now, we give a sufficient condition for upper-bounding $r$ in
Problem~\ref{PB:RWPC}.

\begin{proposition}
  \label{PROP:SC}
  Given a $[n,k,d]$ perfect code with $\Rrho\frac{d-1}2$, if the inequality
  \begin{eqnarray}
    \label{eqn:general}
    q^{n-k} + 1 \le q^r + \sum_{i=0}^{\Rrho}(q-1)^i {n-\ell\choose
    i},
  \end{eqnarray}
  is satisfied, then there exists a vector $\mathbf{y} \in
  \mathbb{F}_q^n$ and a random vector $\mathbf R$, which are solution
  of Problem~\ref{PB:RWPC}. In this case, Problem~\ref{PB:RWPC} always
  has a solution $\mathbf y$.
\end{proposition}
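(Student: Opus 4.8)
The plan is to recast condition~(i) of Problem~\ref{PB:RWPC} as a single affine constraint on the error vector $\mathbf e = \mathbf y-\mathbf x$, and then to conclude by a pigeonhole count. First note that, in the infallible regime $\bound=\Rrho$, conditions~(ii)--(iii) say exactly that $\mathbf e$ is supported on the $n-\ell$ dry positions $\bar{\mathcal I}$ and has Hamming weight at most $\Rrho$. Since $\mathbf yH^t = \mathbf xH^t + \mathbf eH^t$, condition~(i) amounts to prescribing the first $n-k-r$ coordinates of $\mathbf eH^t$ (to $\synd$ minus the corresponding coordinates of $\mathbf xH^t$), the remaining $r$ coordinates being free and, once chosen, defining $\mathbf R$. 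So it is enough to produce $\mathbf e$ with $\mathrm{supp}(\mathbf e)\subseteq\bar{\mathcal I}$ and $\mathrm{wt}(\mathbf e)\le\Rrho$ whose syndrome $\mathbf eH^t$ falls in a prescribed coset $A$ of an $r$-dimensional subspace of $\mathbb F_q^{n-k}$; in particular $|A| = q^r$.

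Next I would count the set of syndromes reachable this way, $T_{\mathrm{good}} = \{\,\mathbf eH^t : \mathrm{supp}(\mathbf e)\subseteq\bar{\mathcal I},\ \mathrm{wt}(\mathbf e)\le\Rrho\,\}$. Because the code is perfect, $\Rrho=\lfloor(d-1)/2\rfloor$, so Proposition~\ref{Prop:Injective} applies: $S_H$ is injective on $\ball(\mathbf 0,\Rrho)$, hence the error patterns in the definition of $T_{\mathrm{good}}$ have pairwise distinct syndromes. Therefore $|T_{\mathrm{good}}|$ equals the number of vectors of $\mathbb F_q^n$ of weight at most $\Rrho$ supported on a fixed set of $n-\ell$ positions, that is $\sum_{i=0}^{\Rrho}(q-1)^i\binom{n-\ell}{i}$.

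Finally I would conclude by pigeonhole inside $\mathbb F_q^{n-k}$, which has $q^{n-k}$ elements. Both $A$ and $T_{\mathrm{good}}$ are subsets of $\mathbb F_q^{n-k}$, so
$$ |A\cap T_{\mathrm{good}}| \ \ge\ |A| + |T_{\mathrm{good}}| - q^{n-k} \ =\ q^r + \sum_{i=0}^{\Rrho}(q-1)^i\binom{n-\ell}{i} - q^{n-k}, $$
and the right-hand side is at least $1$ precisely when inequality~(\ref{eqn:general}) holds. Picking any syndrome in $A\cap T_{\mathrm{good}}$, the unique error pattern $\mathbf e$ of weight at most $\Rrho$ with that syndrome is, by the very definition of $T_{\mathrm{good}}$, supported on $\bar{\mathcal I}$; then $\mathbf y := \mathbf x+\mathbf e$ together with $\mathbf R$ equal to the last $r$ coordinates of $\mathbf yH^t$ solves Problem~\ref{PB:RWPC}. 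Since $\mathbf x$, $\synd$ and $\mathcal I$ were arbitrary, the problem always has a solution whenever (\ref{eqn:general}) is satisfied.

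The only genuinely delicate step --- and the single place where perfectness is really used --- is the exact evaluation of $|T_{\mathrm{good}}|$: one must be sure that no two distinct low-weight dry error patterns collide to the same syndrome, for such a collision would make $T_{\mathrm{good}}$ strictly smaller than the naive count and thereby weaken the bound. This non-collision is exactly the injectivity of $S_H$ up to radius $\lfloor(d-1)/2\rfloor$, which for a perfect code coincides with $\Rrho$. Surjectivity of $S_H$ enters only mildly, to guarantee $|A|=q^r$ exactly (the $r$ randomised syndrome coordinates being freely assignable); everything else is elementary set counting.
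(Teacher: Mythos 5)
Your proof is correct and is essentially the paper's own argument: the paper likewise counts the $q^r$ syndromes compatible with condition~(i) and the $\sum_{i=0}^{\Rrho}(q-1)^i\binom{n-\ell}{i}$ distinct syndromes reachable under (ii)--(iii) (using that $S_H$ is injective on a radius-$\Rrho$ ball for a perfect code), and concludes by the same pigeonhole step $N_1+N_2>q^{n-k}$. Your only cosmetic difference is phrasing the count in terms of the error vector $\mathbf e=\mathbf y-\mathbf x$ rather than $\mathbf y\in\ball(\mathbf x,\Rrho)$, which is just a translation of the same argument.
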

\begin{proof}
  Let $N_1$ ---respectively $N_2$--- be the number of different
  syndromes  generated by the subset of $\mathbb{F}_q^n$ satisfying
  (i) of Problem~\ref{PB:RWPC} --- respectively (ii) and (iii). If
  \begin{equation}
  	\label{EQ:SC}
  	N_1 + N_2 > q^{n-k}.
  \end{equation}
  Then there exists $\mathbf y$ which fulfills conditions (i), (ii),
  and (iii).  The number of different syndromes satisfying by the
  first constraint, for all $\mathbf R$, is $q^r$. Keeping in mind that
  $\ell$ components are locked and the syndrome function restricted to
  $\ball(\mathbf{x},\Rrho)$ is bijective, then
  $$
  N_2 = \sum_{i=0}^{\Rrho}(q-1)^i {n-\ell\choose i}.
  $$
  Combined with the sufficient condition~(\ref{EQ:SC}) we obtain the result.
\end{proof}
Next Section is devoted to the non trivial perfect codes: the Golay
codes, and the ($q$-ary) the Hamming codes.

\subsection{Golay codes}
\subsubsection{Binary Golay code}
We start by study the case of the binary $[23,12,7]_2$ Golay code, which
is perfect. The inequality of the proposition~\ref{PROP:SC} gives
\begin{equation}\label{G2}
r   \ge  \log_2\left( 1 + \frac{796}{3} \ell - \frac{23}{2} \ell^2  +
  \frac{1}{6} \ell^3\right).
\end{equation}

\subsubsection{Ternary perfect Golay code}
The ternary Golay code has parameters $[11,6,5]_3$. Using the
Proposition~\ref{PROP:SC}, we obtain:
\begin{equation}\label{G3}
  r  \ge  \log_3\left(1 + 44\ell - 2\ell^2 \right).
\end{equation}
Eqs~\ref{G2} and~\ref{G3} does not say much. We have plotted the
results in Fig.~\ref{Fig:SuccPb}, and we see that the number of
available bits for embedding degrades very fast with the number of
locked positions.

\begin{figure}[h]
  \centering \subfigure[Binary Golay code.]{
    \includegraphics[width=5.60cm]{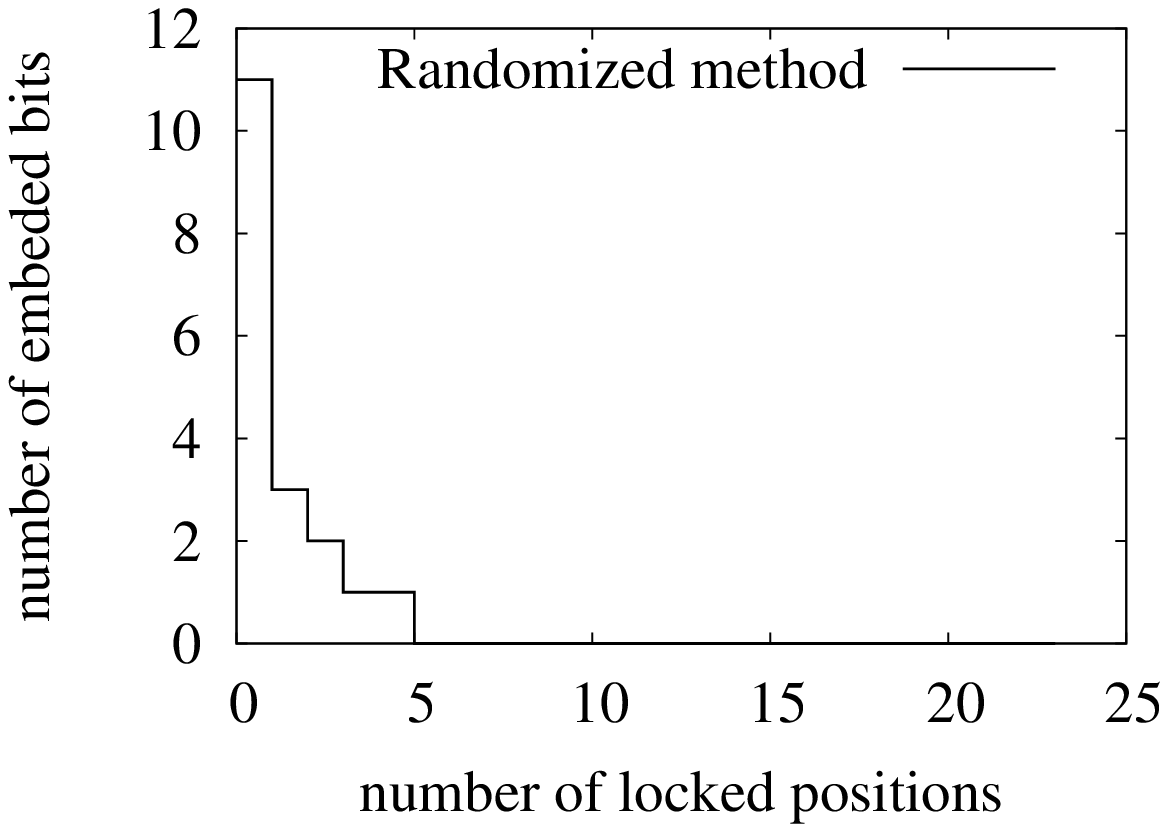}
  } \hfill \subfigure[Ternary Golay code.]{
    \includegraphics[width=5.60cm]{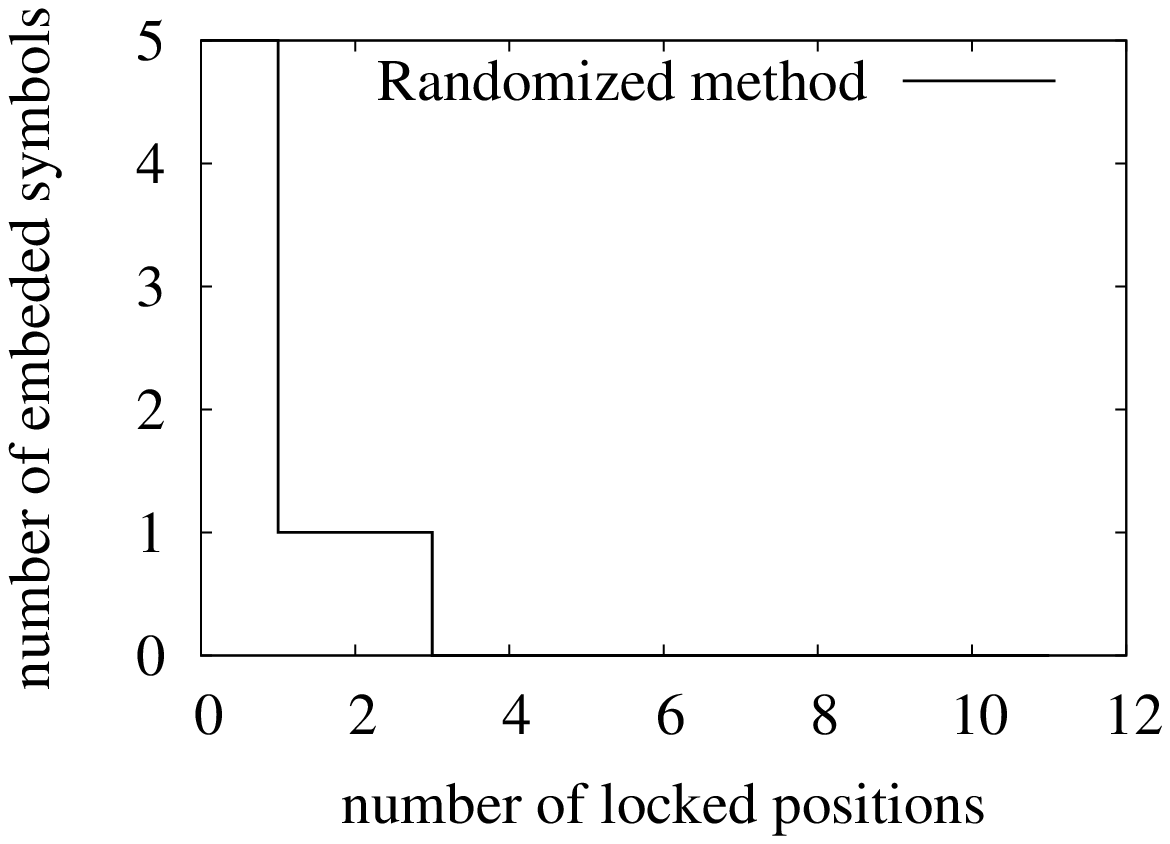}
  }
  \caption{Size of the random part for the two Golay codes. The number
    of remaining bits is plotted, in terms of the number of locked
    positions.}
  \label{Fig:SuccPb}
\end{figure}

\subsection{Hamming codes}
We study the infinite family of Hamming codes. We find $r$, analyze
the found parameters, and  study its asymptotic behavior.

\subsubsection{Computation of $r$}

Let $\mathcal{C}$ be a $[(q^\ext-1)/(q-1),n-\ext,3]_q$ Hamming code
over $\mathbb{F}_q$, for some $\ext$. Its covering radius is
$\Rrho=1$, and thus its embedding efficiency if $\ext$.  We aim to
minimize $r$, the length of the random vector $\mathbf{R}$. Since
$q^{n-k}=q^\ext$, $(q^\ext-1)/(q-1)=n$,
Proposition~\ref{PROP:SC} gives:
\begin{eqnarray}
r & \ge & \log_q \left( 1 + (q-1)\ell \right).\label{eqn:rHamming}
\end{eqnarray}

\subsubsection{Analysis of parameters}
In order to find an extreme case, it we maximize
the number of locked components $\ell$ while still keeping $n-k-r\geq
1$. A direct computation gives:
\begin{eqnarray*}
  \ext-1 & = & \log_q((q-1)\ell +1),\\
  \ell & = & \frac{q^{\ext-1}-1}{q-1} \approx \frac{n}{q}.\\
\end{eqnarray*}
Therefore, using Hamming codes, we can embed at least one information
symbol if no more than a fraction of $\frac1q$ of the components are
locked. This is of course best for $q=2$. The minimum $r$ which
satisfies inequality~(\ref{eqn:rHamming}) is $r= \lceil
\log_q((q-1)\ell +1) \rceil$. In other words, for Hamming codes, the
minimum number of randomized symbols needed to guarantee that the
whole message can be embedded, is logarithmic in the number of locked
components.  Our randomized approach always solves successfully
Problem~\ref{PB:RWPC} while traditional syndrome coding (including wet
paper) exhibits a non-zero failure rate, when $\frac\ell n< \frac 1q$.

\subsubsection{Asymptotic behavior}
Now we evaluate the loss in embedding efficiency.
Then, for a given~$\ell$,
the relative loss of the embedding efficiency is
given by:
$$
\frac{\lceil \log_q((q-1)\ell +1) \rceil}{\ext}.
$$
To conclude this section, we propose to focus on the normalized loss
in symbols for the family of Hamming codes. We assume that the rate of
$\ell$, the number of locked components to compare to $n$, the length
of the cover-data stays constant, i.e.\ $\ell = \r n$, for a given
$\r\in [0,\frac1q($. Then the asymptotic of relative  loss is
$$
  \frac{\log_q((q-1)\ell + 1)}{\ext} 
  \sim\frac{\log_q(n(q-1)\r)}{\ext}
\sim1+\frac{\log_q\r}\ext.
$$
This goes to $1$ when $p$ goes to infinity, i.e.\ all the symbols of
syndrome are consumed by the randomization. It makes sense, since
dealing with a given proportion $\lambda$ of arbitrarily locked
symbols in a long stego-data is much harder than dealing with several
smaller stego-data with the same proportion $\lambda$ of locked
positions.

\section{Using ZZW construction to embed dynamic parameters}
\label{sec:zzw}

In the approach given in previous Section, the sender and recipient
have to fix in advance the value of $r$. Indeed the recipient has to
know which part of syndrome is random. This is not very compliant with
the Wet Paper model, where the recipient does not know the quantity of
wet bits. We propose in this Section a variant of ZZW's
scheme~\cite{zhang_ih_08}, which enables to convey dynamically the
value $r$, depending on the cover-data.

\subsection{The scheme}
We consider that we are treating $n$ blocks of $2^p-1$ bits, $\mathbf
x_1,\dots,\mathbf x_{n}$, for instance displayed as in
Figure~\ref{fig:zzw}. Each block $\mathbf x_i$ is a binary vector of
length $2^p-1$, set as column, and we let $\mathbf v=(v_1,\dots,v_n)$
be the binary vector whose $i$-th coordinate $v_i$ is the parity bit
of column $\mathbf x_i$. We use the (virtual) vector $\mathbf v$ to
convey extra information, while at the same time  the $\mathbf x_i$ are
using for syndrome coding.

Our scheme is threefold : syndrome coding on the $\mathbf x_i$'s using
the parity check $H_1$ of a first Hamming code, with our randomized
method, then (unbounded wet paper) syndrome embedding on the syndromes
$\mathbf s_i$'s of the $\mathbf x_i$'s. This second syndrome embedding
see the $\mathbf s_i$ as $q$-ary symbols, and the matrix in use is the
parity check matrix $H_q$ of a $q$-ary Reed-Solomon code. We call the
$n$ first embeddings the $H_1$-embeddings, and the second one the
$H_q$-embedding. Finally, we use $\mathbf v$ to embed dynamic
information: the number $r$ of random bits, and $f$ the number of
failure in the $H_1$-embeddings. We call this last embedding the
$H_2$-embedding, where $H_2$ is the parity check matrix of a second,
much shorter, binary Hamming code.

We assume that $r$ is bounded by design, say $r\leq r_{\max}$.  We
shall see, after a discussion on all the parameters, that this is one
design parameter of the scheme, together with $o$, which the
precision, in bits, for describing real numbers $\in)\frac12,1]$.

\subsubsection{Embedding}\
 
\emph{ Inspect.} Each column $\mathbf x_1,\dots,\mathbf x_n$ is
inspected, to find the number of dry bits in each. This enables to
determine the size $r$ of the randomized part, which shall be the same
for all columns. This determines the columns $\mathbf x_i$'s where the
$H_1$-embeddings are feasible. Let $f$ be the number of $\mathbf
x_i$'s where the $H_1$-embeddings fail.

\emph{Build the wet channel.} For each of the $n-f$ columns $\mathbf
x_i$'s where the $H_1$-embedding is possible, there is a syndrome
$s_i$ of $p$ bits, where the last $r$ bits are random, thus wet, and
the $p-r$ first bits are dry.  We consider these blocks of $p-r$ bits as a
$q$-ary symbols, with $q=2^{p-r}$.  Thus we have a $q$-ary wet channel
with $n-f$ dry $q$-ary symbols, and $f$ wet $q$-ary symbols

\emph{Embed for the wet channel.} Then, using a Reed-Solomon over the
alphabet $\F_q$, we can embed $(n-f)$ $q$-ary symbols, using a
$n\times(n-f)$ $q$-ary parity check matrix $H_q$ of the code. Note
that the number of rows of this matrix is dynamic since $f$ is
dynamic.

\emph{Embed dynamic data.} We have to embed dynamic
  parameters $r$ and $f$ which are unknown to the recipient, using
  ZZW's virtual vector $\mathbf v$. For this binary channel, the dry
  bits $v_i$ correspond to the columns $\mathbf x_i$ where the
  $H_1$-embedding has failed, and where there is at least one dry
  bit in $\mathbf x_i$. A second Hamming code is used with parity
  check $H_2$ for this embedding.

\subsubsection{Recovery}\

\emph{$H_2$-extraction.} First compute $\mathbf v$, and using
the parity check matrix of the Hamming code $H_2$, extract $r$ and
$f$.  

\emph{$H_1$-extraction} Extract the syndromes of all the column
$\mathbf x_i$'s using the parity check matrix $H_1$, and collect only
the first $p-r$ bits in each column, to build $q$-ary symbols.

\emph{$H_q$-extraction} Build the parity check matrix $H_q$ of the
$q$-ary $[n,f]_q$ Reed-Solomon code, with $q=2^{p-r}$. Using this
matrix, get the $(n-f)$ $q$-ary information symbols, which are the
actual payload.
\begin{figure}
\begin{center}
\includegraphics[scale=.4]{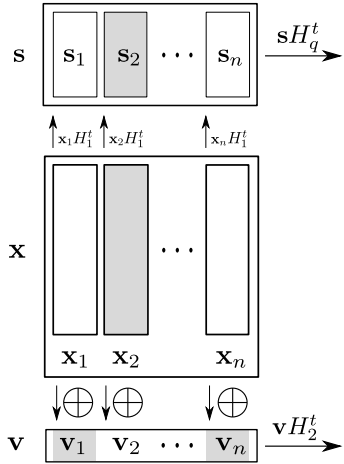}
\end{center}
\caption{A graphical view of our  scheme inspired from ZZW. A syndrome $\mathbf
  s_i$ is considered wet for the $H_q$-embedding when the
  $H_1$-embedding is not feasible. Then the corresponding bit $v_i$ in
  the vector $\mathbf v$ is dry for the $H_2$-embedding. Wet data is
  grey on the Figure.}
\label{fig:zzw}
\end{figure}
\subsection{Analysis}
There are several constraints on the scheme.

First, for a  Reed-Solomon code of length $n$ to exist over
the alphabet $\F_{2^{p-r}}$, we must have $n\leq 2^{p-r}$, for any
$r$, i.e.\ $n\leq 2^{p-r_{\max}}$. We fix \mbox{$n=2^{p-r_{\max}}-1$},
and let us briefly denote $u=p-r_{\max}$.

Then the binary $[n=2^{u}-1,2^u-u-1]_2$ Hamming code, with parity
check matrix $H_2$, is used for embedding in the vector $\mathbf v$,
with $f$ dry symbols. This a unbounded wet channel. From
Proposition~\ref{prop:BarbierMunuera}, we must have
\begin{align}\label{eq:min_f}
f&\geq 2^{u-1},
\end{align}
which implies that some columns $\mathbf x_i$ may be artificially
declared wet, for satisfying Eq.~\ref{eq:min_f}.
 Third, we also must have 
\begin{equation}\label{eq:u}
u=\ceil{\log r_{\max}}+\ceil{\log f_{\max}},
\end{equation}
to be able to embed $r$ and $f$. Since $f\leq 2^u-1$, we have
$\ceil{\log f_{\max}}=u$. Eq.~\ref{eq:u} becomes $u=\ceil{\log
  r_{\max}}+u$, this is clearly not feasible. To remedy this, instead
of embedding $f$, we embed its relative value $f_u=\frac f{2^u}\in[.5,
1] $, up to a fixed precision, say $o$ bits, with $o$ small. Then
Eq.~\ref{eq:u} is replaced by
\begin{align}\label{eq:u8}
  u&=\ceil{\log r_{\max}}+o,\\
  p&=r_{\max}+\ceil{\log r_{\max}}+o,
\end{align}
which is a condition easy to fulfill. It is also possible, by design,
to use the all-one value of $f_u$ as an out-of-range value to declare
an embedding failure. 
 The scheme is locally adaptive to the media:
for instance, in a given image, $r$ and $f$ may take different values
for different areas of the image.

In conclusion, the number of bits that we can embed using that scheme
is bounded by $ (n-f)(p-r)\leq 2^{u-1}(p-r), $ with dynamic  $r$ and $f$.

\section{Conclusion}
\label{sec:conclusion}
In this paper, we addressed the ``worst-case'' scenario, where the
sender cannot accept embedding to fail, and does not want relax the
management of locked components of his cover-data.  As traditional
(wet) syndrome coding may fail, and as the failure probability
increases exponentially with the message length, we proposed here a
different approach, which never fails.  Our solution is based on the
randomization of a part of the syndrome, the other part still carrying
symbols of  the message to transmit.  While our method suffers from a lost
of embedding efficiency, we showed that this loss remains acceptable
for perfect codes. Moreover, we showed how the size of the random part
of the syndrome, which is dynamically estimated during embedding, may
be transmitted to the recipient without any additional communication.

\bibliographystyle{splncs03}
\bibliography{biblio}

\end{document}